\numberwithin{equation}{section}
\numberwithin{figure}{section}
\theoremstyle{plain}
\newtheorem{thm}{Theorem}
 \theoremstyle{definition}
  \newtheorem{example}[thm]{Example}
  \theoremstyle{plain}
  \newtheorem{algorithm}[thm]{Algorithm}
  \theoremstyle{definition}
  \newtheorem{defn}[thm]{Definition}
  \theoremstyle{remark}
  \newtheorem{rem}[thm]{Remark}
  \theoremstyle{remark}
  \newtheorem{notation}[thm]{Notation}
  \theoremstyle{plain}
  \newtheorem{lem}[thm]{Lemma}
\def \Real {{\rm { l \kern -1.5pt R} }}
\begin{document}

\title{Divide-And-Conquer Computation of Cylindrical Algebraic Decomposition}

\author{Adam Strzebo\'nski}

\address{Wolfram Research Inc., 100 Trade Centre Drive, Champaign, IL 61820,
U.S.A. }

\email{adams@wolfram.com}

\date{May 3, 2013}
\begin{abstract}
We present a divide-and-conquer version of the Cylindrical Algebraic
Decomposition (CAD) algorithm. The algorithm represents the input
as a Boolean combination of subformulas, computes cylindrical algebraic
decompositions of solution sets of the subformulas, and combines the
results using the algorithm first introduced in \cite{S8}. We propose
a graph-based heuristic to find a suitable partitioning of the input
and present empirical comparison with direct CAD computation.
\end{abstract}
\maketitle

\section{Introduction}

A \emph{real polynomial system} in variables $x_{1},\ldots,x_{n}$
is a formula\[
S(x_{1},\ldots,x_{n})=\bigvee_{1\leq i\leq l}\bigwedge_{1\leq j\leq m}f_{i,j}(x_{1},\ldots,x_{n})\rho_{i,j}0\]
where $f_{i,j}\in\mathbb{R}[x_{1},\ldots,x_{n}]$, and each $\rho_{i,j}$
is one of $<,\leq,\geq,>,=,$ or $\neq$. 

A subset of $\mathbb{R}^{n}$ is \emph{semialgebraic} if it is a solution
set of a real polynomial system. 

Every semialgebraic set can be represented as a finite union of disjoint
\emph{cells} (see \cite{L}), defined recursively as follows.
\begin{itemize}
\item A cell in $\mathbb{R}$ is a point or an open interval.
\item A cell in $\mathbb{R}^{k+1}$ has one of the two forms\begin{eqnarray*}
 & \{(a_{1},\ldots,a_{k},a_{k+1}):(a_{1},\ldots,a_{k})\in C_{k}\wedge a_{k+1}=r(a_{1},\ldots,a_{k})\}\\
 & \{(a_{1},\ldots,a_{k},a_{k+1}):(a_{1},\ldots,a_{k})\in C_{k}\wedge r_{1}(a_{1},\ldots,a_{k})<a_{k+1}<r_{2}(a_{1},\ldots,a_{k})\}\end{eqnarray*}
where $C_{k}$ is a cell in $\mathbb{R}^{k}$, $r$ is a continuous
algebraic function, and $r_{1}$ and $r_{2}$ are continuous algebraic
functions, $-\infty$, or $\infty$, and $r_{1}<r_{2}$ on $C_{k}$. 
\end{itemize}
The Cylindrical Algebraic Decomposition (CAD) algorithm \cite{C,CJ,S7}
can be used to compute a cell decomposition of any semialgebraic set
presented by a real polynomial system. An alternative method of computing
cell decompositions is given in \cite{CMXY}. Cell decompositions
computed by the CAD algorithm can be represented directly \cite{B2,S7}
as cylindrical algebraic formulas (CAF; a precise definition is given
in Section \ref{CAFSec}). 
\begin{example}
The following formula $F(x,y,z)$ is a CAF representation of a cell
decomposition of the closed unit ball.\begin{eqnarray*}
F(x,y,z) & :=( & x=-1\wedge y=0\wedge z=0)\vee(-1<x<1\wedge b_{2}(x,y,z))\vee\\
 &  & (x=1\wedge y=0\wedge z=0)\\
b_{2}(x,y,z) & := & (y=R_{1}(x)\wedge z=0)\vee(R_{1}(x)<y<R_{2}(x)\wedge b_{2,2}(x,y,z))\vee\\
 &  & (y=R_{2}(x)\wedge z=0)\\
b_{2,2}(x,y,z) & := & z=R_{3}(x,y)\vee R_{3}(x,y)<z<R_{4}(x,y)\vee z=R_{4}(x,y)\end{eqnarray*}
where \begin{eqnarray*}
R_{1}(x) & = & Root_{y,1}(x^{2}+y^{2})=-\sqrt{1-x^{2}}\\
R_{2}(x) & = & Root_{y,2}(x^{2}+y^{2})=\sqrt{1-x^{2}}\\
R_{3}(x,y) & = & Root_{z,1}(x^{2}+y^{2}+z^{2})=-\sqrt{1-x^{2}-y^{2}}\\
R_{4}(x,y) & = & Root_{z,2}(x^{2}+y^{2}+z^{2})=\sqrt{1-x^{2}-y^{2}}\end{eqnarray*}

\end{example}
The CAF representation of a semialgebraic set $A$ can be used to
decide whether $A$ is nonempty, to find the minimal and maximal values
of the first coordinate of elements of $A$, to generate an arbitrary
element of $A$, to find a graphical representation of $A$, to compute
the volume of $A$ or to compute multidimensional integrals over $A$
(see \cite{S4}). 

In our ISSAC conference paper \cite{S8} we presented an algorithm,
\emph{CAFCombine}, computing Boolean operations on cylindrical algebraic
formulas. In this extended version of the paper we investigate how
\emph{CAFCombine} can be used to construct a divide-and-conquer algorithm
for computing a cylindrical algebraic decomposition. The divide-and-conquer
algorithm depends on the algorithm\emph{ Subdivide}. Given\emph{ }a
real polynomial system\emph{ $S(x_{1},\ldots,x_{n})$}, \emph{Subdivide}
finds a Boolean formula $\Phi$ and real polynomial systems $P_{1},\ldots,P_{m}$
such that \[
S(x_{1},\ldots,x_{n})\Leftrightarrow\Phi(P_{1}(x_{1},\ldots,x_{n}),\ldots,P_{m}(x_{1},\ldots,x_{n}))\]

\begin{algorithm}
(DivideAndConquerCAD)\\
Input: \emph{A real polynomial system} $S(x_{1},\ldots,x_{n})$.\textup{}\\
\textup{\emph{Output:}}\textup{ A cylindrical algebraic formula
$F(x_{1},\ldots,x_{n})$ equivalent to $S(x_{1},\ldots,x_{n})$.}
\begin{enumerate}
\item Use \emph{Subdivide} to find $\Phi$ and $P_{1},\ldots,P_{m}$ such
that \[
S\Leftrightarrow\Phi(P_{1},\ldots,P_{m})\]

\item If $m=1$ and $P_{1}=S$ then return $CAD(S)$.
\item For $1\leq i\leq m$, compute\[
F_{i}:=CAD(P_{i})\]

\item Use \emph{CAFCombine} to compute a CAF \textup{$F$ equivalent to\[
\Phi(F_{1},\ldots,F_{m})\]
}
\item Return \textup{$F$. }
\end{enumerate}
\end{algorithm}
The practical usefulness of \emph{DivideAndConquerCAD} depends on the
choice of the algorithm \emph{Subdivide}. Our experiments suggest
that \emph{DivideAndConquerCAD} is likely to be faster than a direct
CAD computation if $\Phi$ is a disjunction and, for any for $i\neq j$,
$P_{i}$ and $P_{j}$ contain few polynomials in common. We propose
an algorithm \emph{Subdivide} for polynomial systems $S$ given in
disjunctive normal form. The algorithm is based on the connectivity
structure of a graph whose vertices are the disjunction terms of $S$
and whose edges depend on polynomials shared between the disjunction
terms of $S$.
\begin{example}
Let $S(x,y)$ be the result of eliminating the quantifier from\begin{eqnarray*}
S_{0}(x,y) & :=\exists z & (z^{2}(-151x-740y-642)+z(-39x+285y-634)-241x-\\
 &  & 57y-985<0\wedge z^{2}(275x-144y+128)+z(94x-658y-267)+\\
 &  & 973x-810y+928=0\wedge z^{2}(-310x-224y+144)+\\
 &  & z(-256x-143y-77)+945x-260y+825\leq0)\end{eqnarray*}
using the virtual term substitution algorithm (\cite{W1}, \emph{Mathematica}
command $Resolve[S_{0},Reals]$). $S(x,y)$ is a disjunction of $43$
conjunctions of polynomial equations and inequalities.\\
\emph{Problem:} Find a cell decomposition for the solution set
of $S(x,y)$. \emph{}\\
\emph{Method 1}: A direct application of the CAD algorithm. The
computation takes $48$ seconds. \\
\emph{Method 2:} \emph{DivideAndConquerCAD} with \emph{Subdivide}
which represents $S(x,y)$ as a disjunction of $43$ subformulas,
each subformula equal to one of the conjunctions in $S(x,y)$. The
computation takes $8.5$ seconds.\\
\emph{Method 3}: \emph{DivideAndConquerCAD} with graph-based \emph{Subdivide}
which represents $S(x,y)$ as a disjunction of $5$ subformulas. The
subformulas are disjunctions of, respectively, $24$, $11,$ $6$,
$1$, and $1$ of the conjunctions in $S(x,y)$. The computation takes
$2.4$ seconds.
\end{example}
The paper is organized as follows. Section \ref{CAFSec} defines cylindrical
algebraic formulas. The algorithms \emph{CAFCombine} and \emph{Subdivide}
are presented in sections \ref{CAFCombineSec} and \ref{SubdivideSec}.
The last section contains experimental data comparing the performance
of \emph{DivideAndConquerCAD} and of direct CAD computation.

\section{\label{CAFSec}Cylindrical Algebraic Formulas}
\begin{defn}
A \emph{real algebraic function} given by \emph{defining polynomial}
$f\in\mathbb{\mathbb{Z}}[x_{1},\ldots,x_{n},y]$ and \emph{root number}
$p\in\mathbb{N}_{+}$ is the function\begin{equation}
Root_{y,p}f:\mathbb{R}^{n}\ni(x_{1},\ldots,x_{n})\longrightarrow Root_{y,p}f(x_{1},\ldots,x_{n})\in\mathbb{R}\label{rootfun}\end{equation}
where $Root_{y,p}f(x_{1},\ldots,x_{n})$ is the $p$-th real root
of $f$ treated as a univariate polynomial in $y$. The function is
defined for those values of  $x_{1},\ldots,x_{n}$ for which $f(x_{1},\ldots,x_{n},y)$
has at least $p$ real roots. The real roots are ordered by the increasing
value, counting multiplicities. A real algebraic number $Root_{y,p}f\in\mathbb{R}$
given by \emph{defining polynomial} $f\in\mathbb{Z}[y]$ and \emph{root
number} $p$ is the $p$-th real root of $f$. Let $Alg$ be the set
of real algebraic numbers and for $C\subseteq\mathbb{R}^{n}$ let
$Alg_{C}$ denote the set of all algebraic functions defined and continuous
on $C$. (See \cite{S2,S4} for more details on how algebraic numbers
and functions can be implemented in a computer algebra system.)
\end{defn}

\begin{defn}
A set $P\subseteq\mathbb{R}[x_{1},\ldots,x_{n},y]$ is \emph{delineable}
over $C\subseteq\mathbb{R}^{n}$ iff 
\begin{enumerate}
\item $\forall f\in P\:\exists k_{f}\in\mathbb{N\:}\forall a\in C\;\sharp\{b\in\mathbb{R}:f(a,b)=0\}=k_{f}$.
\item For any $f\in P$ and $1\leq p\leq k_{f}$, $Root_{y,p}f$ is a continuous
function on $C$.
\item \begin{eqnarray*}
\forall f,g\in P &  & (\exists a\in C\: Root_{y,p}f(a)=Root_{y,q}g(a)\Leftrightarrow\\
 &  & \forall a\in C\: Root_{y,p}f(a)=Root_{y,q}g(a))\end{eqnarray*}

\end{enumerate}
\end{defn}

\begin{defn}
\label{CSAC}A \emph{cylindrical system of algebraic constraints}
in variables $x_{1},\ldots,x_{n}$ is a sequence $A=(A_{1},\ldots,A_{n})$
satisfying the following conditions. 
\begin{enumerate}
\item For $1\leq k\leq n$, $A_{k}$ is a set of formulas\begin{eqnarray*}
A_{k} & = & \{a_{i_{1},\ldots,i_{k}}(x_{1},\ldots,x_{k}):1\leq i_{1}\leq m\wedge1\leq i_{2}\leq m_{i_{1}}\wedge\ldots\wedge1\leq i_{k}\leq m_{i_{1},\ldots,i_{k-1}}\}\end{eqnarray*}
 
\item For each $1\leq i_{1}\leq m$, $a_{i_{1}}(x_{1})$ is $true$ or \[
x_{1}=r\]
 where $r\in Alg$, or \[
r_{1}<x_{1}<r_{2}\]
where $r_{1}\in Alg\cup\{-\infty\}$, $r_{2}\in Alg\cup\{\infty\}$
and $r_{1}<r_{2}$. Moreover, if $s_{1},s_{2}\in Alg\cup\{-\infty,\infty\}$,
$s_{1}$ appears in $a_{u}(x_{1})$, $s_{2}$ appears in $a_{v}(x_{1})$
and $u<v$ then $s_{1}\leq s_{2}$.
\item Let $k<n$, $I=(i_{1},\ldots,i_{k})$ and let $C_{I}\subseteq\mathbb{R}^{k}$
be the solution set of\begin{equation}
a_{i_{1}}(x_{1})\wedge a_{i_{1},i_{2}}(x_{1},x_{2})\wedge\ldots\wedge a_{i_{1},\ldots,i_{k}}(x_{1},\ldots,x_{k})\label{supp}\end{equation}

\begin{enumerate}
\item For each $1\leq i_{k+1}\leq m_{I}$, \[
a_{i_{1},\ldots,i_{k},i_{k+1}}(x_{1},\ldots,x_{k},x_{k+1})\]
 is $true$ or\begin{equation}
x_{k+1}=r(x_{1},\ldots,x_{k})\label{eqbd}\end{equation}
and $r\in Alg_{C_{I}}$, or\begin{equation}
r_{1}(x_{1},\ldots,x_{k})<x_{k+1}<r_{2}(x_{1},\ldots,x_{k})\label{ineqbd}\end{equation}
where $r_{1}\in Alg_{C_{I}}\cup\{-\infty\}$, $r_{2}\in Alg_{C_{I}}\cup\{\infty\}$
and $r_{1}<r_{2}$ on $C_{I}$. 
\item If $s_{1},s_{2}\in Alg_{C_{I}}\cup\{-\infty,\infty\}$, $s_{1}$ appears
in \[
a_{i_{1},\ldots,i_{k},u}(x_{1})\]
$s_{2}$ appears in \[
a_{i_{1},\ldots,i_{k},v}(x_{1})\]
 and $u<v$ then $s_{1}\leq s_{2}$ on $C_{I}$.
\item Let $P_{I}\subseteq\mathbb{Z}[x_{1},\ldots,x_{k},x_{k+1}]$ be the
set of defining polynomials of all real algebraic functions that appear
in formulas $a_{J}$ for $J=(i_{1},\ldots,i_{k},i_{k+1})$, $1\leq i_{k+1}\leq m_{I}$.
Then $P_{I}$ is\emph{ }delineable\emph{ }over $C_{I}$.
\end{enumerate}
\end{enumerate}
\end{defn}

\begin{defn}
\label{CAF}Let $A$ be a cylindrical system of algebraic constraints
in variables $x_{1},\ldots,x_{n}$. Define

\begin{eqnarray*}
b_{i_{1},\ldots,i_{n}}(x_{1},\ldots,x_{n}) & := & true\end{eqnarray*}
For $2\leq k\leq n$, \emph{level $k$ cylindrical algebraic subformulas}
given by $A$ are the formulas\begin{eqnarray*}
 & b_{i_{1},\ldots,i_{k-1}}(x_{1},\ldots,x_{n}):=\bigvee_{1\leq i_{k}\leq m_{i_{1},\ldots,i_{k-1}}}a_{i_{1},\ldots,i_{k}}(x_{1},\ldots,x_{k})\wedge b_{i_{1},\ldots,i_{k}}(x_{1},\ldots,x_{n})\end{eqnarray*}
The \emph{support cell} of $b_{i_{1},\ldots,i_{k-1}}$ is the solution
set \[
C_{i_{1},\ldots,i_{k-1}}\subseteq\mathbb{R}^{k}\]
 of \[
a_{i_{1}}(x_{1})\wedge a_{i_{1},i_{2}}(x_{1},x_{2})\wedge\ldots\wedge a_{i_{1},\ldots,i_{k-1}}(x_{1},\ldots,x_{k-1})\]
\emph{The cylindrical algebraic formula (CAF)} given by $A$ is the
formula\[
F(x_{1},\ldots,x_{n}):=\bigvee_{1\leq i_{1}\leq m}a_{i_{1}}(x_{1})\wedge b_{i_{1}}(x_{1},\ldots,x_{n})\]
\end{defn}
\begin{rem}
Let $F(x_{1},\ldots,x_{n})$ be a CAF given by a cylindrical system
of algebraic constraints $A$. Then
\begin{enumerate}
\item For $1\leq k\leq n$, sets $C_{i_{1},\ldots,i_{k}}$ are cells in
$\mathbb{R}^{k}$.
\item Cells\begin{eqnarray*}
 & \{C_{i_{1},\ldots,i_{n}}:1\leq i_{1}\leq m\wedge1\leq i_{2}\leq m_{i_{1}}\wedge\ldots\wedge1\leq i_{n}\leq m_{i_{1},\ldots,i_{n-1}}\}\end{eqnarray*}
form a decomposition of the solution set $S_{F}$ of $F$, i.e. they
are disjoint and their union is equal to $S_{F}$.
\end{enumerate}
\end{rem}
\begin{proof}
Both parts of the remark follow from the definitions of $A$ and $F$.\end{proof}
\begin{rem}
Given a real polynomial system $S(x_{1},\ldots,x_{n})$ a version
of the CAD algorithm can be used to find a CAF $F(x_{1},\ldots,x_{n})$
equivalent to $S(x_{1},\ldots,x_{n})$.\end{rem}
\begin{proof}
The version of CAD described in \cite{S7} returns a CAF equivalent
to the input system.
\end{proof}

\section{\label{CAFCombineSec}Algorithm CAFCombine}

In this section we describe the algorithm \emph{CAFCombine}. The algorithm
is a modified version of the CAD algorithm. We describe only the modification.
For details of the CAD algorithm see \cite{CJ,C}. Our implementation
is based on the version of CAD described in \cite{S7}.
\begin{defn}
Let $P\subseteq\mathbb{R}[x_{1},\ldots,x_{n}]$ be a finite set of
polynomials and let $\overline{P}$ be the set of irreducible factors
of elements of $P$. $W=(W_{1},\ldots,W_{n})$ is a \emph{projection
sequence} for $P$ iff
\begin{enumerate}
\item \emph{Projection sets} $W_{1},\ldots,W_{n}$ are finite sets of irreducible
polynomials.
\item For $1\leq k\leq n$, $\overline{P}\cap(\mathbb{R}[x_{1},\ldots,x_{k}]\setminus\mathbb{R}[x_{1},\ldots,x_{k-1}])\subseteq W_{k}\subseteq\mathbb{R}[x_{1},\ldots,x_{k}]\setminus\mathbb{R}[x_{1},\ldots,x_{k-1}]$.
\item If $k<n$ and all polynomials of $W_{k}$ have constant signs on a
cell $C\subseteq\mathbb{R}^{k}$, then all polynomials of $W_{k+1}$
that are not identically zero on $C\times\mathbb{R}$ are delineable
over $C$. 
\end{enumerate}
\end{defn}
\begin{rem}
For an arbitrary finite set $P\subseteq\mathbb{R}[x_{1},\ldots,x_{n}]$
a projection sequence can be computed using Hong's projection operator
\cite{H}. McCallum's projection operator \cite{MC1,MC2} gives smaller
projection sets for well-oriented sets $P$.

If $P\subseteq Q\subseteq\mathbb{R}[x_{1},\ldots,x_{n}]$ and $W$
is a projection sequence for $Q$ then $W$ is a projection sequence
for $P$.\end{rem}
\begin{notation}
For a CAF $F$, let $P_{F}$ denote the set of defining polynomials
of all algebraic numbers and functions that appear in $F$ . 
\end{notation}
First let us prove the following rather technical effective lemmas
that will be used in the algorithm. We use notation of Definition
\ref{CAF}.
\begin{lem}
\label{lem1}Let \[
F(x_{1},\ldots,x_{n}):=\bigvee_{1\leq i_{1}\leq m}a_{i_{1}}(x_{1})\wedge b_{i_{1}}(x_{1},\ldots,x_{n})\]
be a CAF and let $-\infty=r_{0}<r_{1}<\ldots<r_{l}<r_{l+1}=\infty$
be such that all real roots of elements of $P_{F}\cap\mathbb{R}[x_{1}]$
are among $r_{1},\ldots,r_{l}$. Let $a(x_{1})$ be either $x_{1}=r_{j}$
for some $1\leq j\leq l$, or $r_{j}<x_{1}<r_{j+1}$ for some $0\leq j\leq l$,
and let $C_{a}$ be the solution set of $a$. Then \begin{equation}
\forall x_{1}\in C_{a}\: F(x_{1},\ldots,x_{n})\Leftrightarrow G(x_{1},\ldots,x_{n})\label{eqv01}\end{equation}
and one of the following two statements is true
\begin{enumerate}
\item There exist $1\leq i_{1}\leq m$ such that $C_{a}\subseteq C_{i_{1}}$
and $G(x_{1},\ldots,x_{n})=b_{i_{1}}(x_{1},\ldots,x_{n})$.
\item For all $1\leq i_{1}\leq m$, $C_{a}\cap C_{i_{1}}=\emptyset$ and
$G(x_{1},\ldots,x_{n})=false$ 
\end{enumerate}
Moreover, given $F$ and $a$, $G$ can be found algorithmically.\end{lem}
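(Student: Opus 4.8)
The plan is to exploit the fact that the partition points $r_1,\ldots,r_l$ contain all the algebraic numbers that define the level-one support cells $C_{i_1}$, so that the cell $C_a$ lies inside exactly one $C_{i_1}$ or inside none. First I would record that each $a_{i_1}(x_1)$ is either $true$, an equation $x_1=r$, or a double inequality $r_1<x_1<r_2$, and that the defining polynomials of the algebraic numbers $r,r_1,r_2$ occurring here all belong to $P_F\cap\mathbb{R}[x_1]$. Consequently every equation-point and every finite endpoint of a cell $C_{i_1}$ is a real root of some element of $P_F\cap\mathbb{R}[x_1]$, hence lies among $r_1,\ldots,r_l$.

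The crux is then to establish the dichotomy that for each $i_1$ either $C_a\subseteq C_{i_1}$ or $C_a\cap C_{i_1}=\emptyset$. Since $C_a$ is either the singleton $\{r_j\}$ or the open interval $(r_j,r_{j+1})$ between two consecutive partition points, and since by the previous step the boundary points of $C_{i_1}$ all lie in $\{r_1,\ldots,r_l\}$, the set $C_a$ cannot straddle a boundary of $C_{i_1}$. Concretely, no root lies strictly inside $(r_j,r_{j+1})$, so an interval cell $C_{i_1}=(s_1,s_2)$ with $s_1,s_2\in\{r_1,\ldots,r_l\}\cup\{-\infty,\infty\}$ either contains $C_a$ or is disjoint from it, and a point cell $C_{i_1}=\{r_{j'}\}$ either coincides with $C_a$ or is disjoint from it. I expect the only genuine work here to be a short case check of $C_a$ (point versus interval) against $C_{i_1}$ (point versus interval); this is exactly where the hypothesis that $r_1,\ldots,r_l$ exhausts the roots of $P_F\cap\mathbb{R}[x_1]$ is indispensable.

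Next I would invoke the ordering condition of Definition \ref{CSAC}(2), which arranges the level-one cells $C_{i_1}$ in increasing order and makes them pairwise disjoint; combined with the dichotomy this shows $C_a$ is contained in \emph{at most one} $C_{i_1}$. The argument then splits into the two alternatives of the statement. If $C_a\subseteq C_{i_1}$ for the unique such index, then on $C_a$ the constraint $a_{i_1}$ holds identically while every $a_{i_1'}$ with $i_1'\neq i_1$ fails, so only the $i_1$-th disjunct of $F$ survives and $\forall x_1\in C_a\;F\Leftrightarrow b_{i_1}$; this is alternative (1) with $G=b_{i_1}$. If instead $C_a$ meets no $C_{i_1}$, every disjunct of $F$ is identically false on $C_a$, so $\forall x_1\in C_a\;F\Leftrightarrow false$; this is alternative (2) with $G=false$.

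Finally, for the effectivity claim I would observe that deciding, for each $i_1$, whether $C_a\subseteq C_{i_1}$ or $C_a\cap C_{i_1}=\emptyset$ reduces to comparing the real algebraic numbers $r_j$ (or the endpoints $r_j$ and $r_{j+1}$) against the equation-points and endpoints of $a_{i_1}$, all of which are real algebraic numbers with explicitly known defining polynomials. Such comparisons are computable, so one can algorithmically locate the unique containing index $i_1$ and return $b_{i_1}$, or, finding none, return $false$.
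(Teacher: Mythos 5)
Your proposal is correct and follows essentially the same route as the paper's proof: observe that every algebraic number appearing in some $a_{i_{1}}$ is a root of an element of $P_{F}\cap\mathbb{R}[x_{1}]$ and hence equals some $r_{j_{0}}$, then do the point/interval case analysis to get the dichotomy $C_{a}\subseteq C_{i_{1}}$ or $C_{a}\cap C_{i_{1}}=\emptyset$, from which the equivalence and effectivity follow. You merely make explicit two details the paper leaves implicit, namely the uniqueness of the containing cell (via the ordering condition) and the computability of the algebraic-number comparisons.
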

\begin{proof}
Let $r$ be an algebraic number that appears in $a_{i_{1}}$. Then
$r=Root_{x_{1},p}f$ for some $f\in P_{F}$. Hence, $r=r_{j_{0}}$
for some $1\leq j_{0}\leq l$ and the value of $j_{0}$ can be determined
algorithmically. If $a$ is $x_{1}=r_{j}$, then $C_{a}\cap C_{i_{1}}\neq\emptyset$
iff $a_{i_{1}}$ is either $x_{1}=r_{j}$ or $r_{u}<x_{1}<r_{v}$
with $u<j<v$. In both cases $C_{a}\subseteq C_{i_{1}}$. If $a$
is $r_{j}<x_{k}<r_{j+1}$, then $C_{a}\cap C_{i_{1}}\neq\emptyset$
iff $a_{i_{1}}$ is $r_{u}<x_{1}<r_{v}$ with $u\leq j$ and $v\geq j+1$.
In this case also $C_{a}\subseteq C_{i_{1}}$. Equivalence (\ref{eqv01})
follows from the statements (1) and (2).\end{proof}
\begin{lem}
\label{lem2}Let $2\leq k\leq n$, let\begin{eqnarray*}
b_{i_{1},\ldots,i_{k-1}}(x_{1},\ldots,x_{n}) & := & \bigvee_{1\leq i_{k}\leq m_{i_{1},\ldots,i_{k-1}}}a_{i_{1},\ldots,i_{k}}(x_{1},\ldots,x_{k})\wedge b_{i_{1},\ldots,i_{k}}(x_{1},\ldots,x_{n})\end{eqnarray*}
be a level \emph{$k$ }cylindrical algebraic subformula of a CAF $F$,
let \emph{$W=(W_{1},\ldots,W_{n})$ }be a projection sequence for
$P_{F}$. Let $C\subseteq\mathbb{R}^{k-1}$ be a cell such that all
polynomials of $W_{k-1}$ have constant signs on $C$ and \textup{$C\subseteq C_{i_{1},\ldots,i_{k-1}}$.}
Let $(c_{1},\ldots,c_{k-1})\in C$ and let $d_{1}<\ldots<d_{l}$ be
all real roots of $\{f(c_{1},\ldots,c_{k-1},x_{k}):f\in W_{k}\}$.
For $1\leq j\leq l$, let $r_{j}:=Root_{x_{k},p}f$, where $f\in W_{k}$
and $d_{j}$ is the $p$-th root of $f(c_{1},\ldots,c_{k-1},x_{k})$.
Let $a(x_{1},\ldots,x_{k})$ be either $x_{k}=r_{j}$ for some $1\leq j\leq l$,
or $r_{j}<x_{k}<r_{j+1}$ for some $0\leq j\leq l$, where $r_{0}:=-\infty$
and $r_{l+1}:=\infty$ and let\[
C_{a}:=\{(x_{1},\ldots,x_{k}):(x_{1},\ldots,x_{k-1})\in C\wedge a(x_{1},\ldots,x_{k})\}\]
Then \begin{equation}
\forall(x_{1},\ldots,x_{k})\in C_{a}\: b_{i_{1},\ldots,i_{k-1}}(x_{1},\ldots,x_{n})\Leftrightarrow G(x_{1},\ldots,x_{n})\label{eqv1}\end{equation}
and one of the following two statements is true
\begin{enumerate}
\item There exist $1\leq i_{k}\leq m_{i_{1},\ldots,i_{k-1}}$ such that
\[
C_{a}\subseteq C_{i_{1},\ldots,i_{k-1},i_{k}}\]
 and \[
G(x_{1},\ldots,x_{n})=b_{i_{1},\ldots,i_{k}}(x_{1},\ldots,x_{n})\]

\item For all $1\leq i_{k}\leq m_{i_{1},\ldots,i_{k-1}}$ \[
C_{a}\cap C_{i_{1},\ldots,i_{k-1},i_{k}}=\emptyset\]
 and \[
G(x_{1},\ldots,x_{n})=false\]

\end{enumerate}
Moreover, given $b_{i_{1},\ldots,i_{k-1}}$, $a$, $(c_{1},\ldots,c_{k-1})$,
$d_{1},\ldots,d_{l}$ and the multiplicity of $d_{j}$ as a root of
$f$, for all $1\leq j\leq l$ and $f\in W_{k}$, $G$ can be found
algorithmically.\end{lem}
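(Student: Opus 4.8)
The plan is to follow the same strategy as in the proof of Lemma \ref{lem1}, but with the constant algebraic numbers $r_j$ replaced by the algebraic functions $r_j=Root_{x_k,p}f$, the essential new ingredient being the delineability supplied by the projection sequence. First I would invoke the defining property of $W$: since all polynomials of $W_{k-1}$ have constant signs on the cell $C\subseteq\mathbb{R}^{k-1}$, every polynomial of $W_k$ that is not identically zero on $C\times\mathbb{R}$ is delineable over $C$. In particular, for each such $f\in W_k$ the number of real roots of $f$ (as a polynomial in $x_k$) is constant over $C$, each root function $Root_{x_k,p}f$ is continuous on $C$, and distinct root functions do not cross on $C$. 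Consequently the ordered roots $d_1<\ldots<d_l$ at the sample point $(c_1,\ldots,c_{k-1})$ faithfully record the root structure over the whole of $C$: the functions $r_1<\ldots<r_l$ are well defined, continuous, and strictly ordered on all of $C$.

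Next I would identify the boundary functions of each level-$k$ constraint with the $r_j$. The defining polynomials of the algebraic functions occurring in the formulas $a_{i_1,\ldots,i_{k-1},i_k}$ constitute the set $P_I$, which is contained in $\overline{P_F}$ and hence in $W_k$. Thus every such algebraic function is of the form $Root_{x_k,p}f$ with $f\in W_k$ delineable over $C$; evaluating at the sample point shows it equals exactly one of the $d_j$, and by delineability the corresponding identity $Root_{x_k,p}f=r_j$ holds throughout $C$. This lets me read off, for each constraint $a_{i_1,\ldots,i_{k-1},i_k}$, the indices of the $r_j$ that bound it, reducing the geometry over $C$ to the one-dimensional fiber analysis already carried out in Lemma \ref{lem1}.

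I would then repeat that case analysis in the fiber. If $a$ is $x_k=r_j$, then $C_a\cap C_{i_1,\ldots,i_{k-1},i_k}\neq\emptyset$ exactly when $a_{i_1,\ldots,i_{k-1},i_k}$ is $x_k=r_j$ or is $r_u<x_k<r_v$ with $u<j<v$, and in either case $C_a\subseteq C_{i_1,\ldots,i_{k-1},i_k}$; if $a$ is $r_j<x_k<r_{j+1}$, then the intersection is nonempty exactly when the constraint is $r_u<x_k<r_v$ with $u\leq j$ and $v\geq j+1$, and again $C_a\subseteq C_{i_1,\ldots,i_{k-1},i_k}$. Because the ordering condition in Definition \ref{CSAC} makes the child cells $C_{i_1,\ldots,i_{k-1},i_k}$ pairwise disjoint, $C_a$ meets at most one of them, so exactly one of alternatives (1) and (2) occurs. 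The equivalence (\ref{eqv1}) then follows: since $C_a\subseteq C_{i_1,\ldots,i_{k-1}}$, the first $k-1$ constraints hold on all of $C_a$, so over $C_a$ at most one disjunct of $b_{i_1,\ldots,i_{k-1}}$ has its level-$k$ constraint satisfied, and on that disjunct the formula reduces to $b_{i_1,\ldots,i_k}$, while if none is satisfied it reduces to $false$. The argument is effective: given the sample point, the root values $d_j$, their multiplicities, and the polynomial $f\in W_k$ producing each, the identification of each constraint's boundary functions with specific $r_j$ is a finite computation, so $G$ can be determined algorithmically.

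The main obstacle I anticipate is precisely the delineability step. Unlike Lemma \ref{lem1}, where the $r_j$ are constants and "roots do not cross" is vacuous, here I must be certain that the root count, continuity, and ordering established at the single sample point $(c_1,\ldots,c_{k-1})$ genuinely propagate across the positive-dimensional cell $C$, and that each algebraic function named in a constraint is globally, not merely pointwise, equal to the claimed $r_j$. This is exactly what the projection-sequence property guarantees, via the inclusion $P_I\subseteq W_k$ together with the constant-sign hypothesis on $W_{k-1}$, so the crux of the proof lies in organizing these facts rather than in any new estimate.
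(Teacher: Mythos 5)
Your proposal follows the same route as the paper's own proof: use the constant-sign hypothesis on $W_{k-1}$ together with the projection-sequence property to get delineability of the elements of $W_{k}$ over $C$, conclude that $r_{1}<\ldots<r_{l}$ are continuous and stay strictly ordered on all of $C$ and that each algebraic function named in a level-$k$ constraint coincides with some $r_{j_{0}}$ throughout $C$, and then run the same fiber case analysis as in Lemma \ref{lem1}. The case analysis, the disjointness observation, and the derivation of (\ref{eqv1}) from alternatives (1) and (2) all match the paper.

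The one step that is wrong as stated is the inclusion you use to identify the boundary functions: you claim the defining polynomials of the algebraic functions occurring in the constraints form a set ``contained in $\overline{P_{F}}$ and hence in $W_{k}$.'' This fails in general: $W_{k}$ consists of irreducible polynomials, while a defining polynomial $f\in P_{F}$ of a function $Root_{x_{k},p}f$ appearing in the CAF need not be irreducible (and may even have factors free of $x_{k}$), so $f$ itself need not belong to $W_{k}$; what the projection-sequence property gives is only that every irreducible factor of $f$ depending on $x_{k}$ lies in $W_{k}$. This matters precisely for the ``algorithmically'' clause of the lemma: the root index $p$ is counted with respect to $f$, whereas $d_{1},\ldots,d_{l}$ are the roots of the elements of $W_{k}$, so to locate which $d_{j_{0}}$ is the $p$-th root of $f(c_{1},\ldots,c_{k-1},x_{k})$ one must combine the roots of the $x_{k}$-dependent factors of $f$, multiplicities counted --- which is exactly why the lemma's hypotheses supply the multiplicity of each $d_{j}$ as a root of each element of $W_{k}$, and is the point the paper's proof spells out. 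With this repair (pass to irreducible factors and count multiplicities, as the paper does), the rest of your argument --- delineability forcing $Root_{x_{k},p}f=r_{j_{0}}$ on all of $C$, and the one-dimensional case analysis --- goes through unchanged.
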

\begin{proof}
Let $r$ be an algebraic function that appears in $a_{i_{1},\ldots,i_{k}}$.
Then $r=Root_{x_{k},p}f$ for some $f\in P_{F}$. By Definition \ref{CSAC},
$r$ is defined and continuous on $C$. Since $W$ is a projection
sequence for $P_{F}$, all factors of $f$ that depend on $x_{k}$
are elements of $W_{k}$. Hence, $r(c_{1},\ldots,c_{k-1})=d_{j_{0}}$
for some $1\leq j_{0}\leq l$. Since $d_{j_{0}}$ is the $p$-th of
real roots of factors of $f$, multiplicities counted, if the multiplicity
of $d_{j}$ as a root of $f$ is known for all $1\leq j\leq l$ and
$f\in W_{k}$, the value of $j_{0}$ can be determined algorithmically.
Since all polynomials of $W_{k-1}$ have constant signs on $C$, all
elements of $W_{k}$ that are not identically zero on $C$ are delineable
over $C$. Therefore, $r=r_{j_{0}}$ and $r_{1}<\ldots<r_{l}$ on
$C$. If $a$ is $x_{k}=r_{j}$, then $C_{a}\cap C_{i_{1},\ldots,i_{k-1},i_{k}}\neq\emptyset$
iff $a_{i_{1},\ldots,i_{k}}$ is either $x_{k}=r_{j}$ or $r_{u}<x_{k}<r_{v}$
with $u<j<v$. In both cases $C_{a}\subseteq C_{i_{1},\ldots,i_{k-1},i_{k}}$.
If $a$ is $r_{j}<x_{k}<r_{j+1}$, then $C_{a}\cap C_{i_{1},\ldots,i_{k-1},i_{k}}\neq\emptyset$
iff $a_{i_{1},\ldots,i_{k}}$ is $r_{u}<x_{k}<r_{v}$ with $u\leq j$
and $v\geq j+1$. In this case also $C_{a}\subseteq C_{i_{1},\ldots,i_{k-1},i_{k}}$.
Equivalence (\ref{eqv1}) follows from the statements (1) and (2).
\end{proof}
Let us now describe two subalgorithms used in \emph{CAFCombine}.
The first, recursive, subalgorithm requires its input to satisfy the
following conditions.
\begin{enumerate}
\item \emph{$W=(W_{1},\ldots,W_{n})$ }is a projection sequence for $P_{F_{1}}\cup\ldots\cup P_{F_{m}}$,
where \emph{\[
F_{1}(x_{1},\ldots,x_{n}),\ldots,F_{m}(x_{1},\ldots,x_{n})\]
} are cylindrical algebraic formulas.
\item $(c_{1},\ldots,c_{k-1})\in C$, \emph{$2\leq k\leq n$} and $C\subseteq\mathbb{R}^{k-1}$
is a cell such that all polynomials of $W_{k-1}$ have constant signs
on $C$.
\item Each $B_{j}$ is a\emph{ }level $k$ cylindrical algebraic formula\emph{
}of $F_{j}$ or $false$.
\item $C$ is contained in the intersection of support cells of all $B_{j}$
that are not $false$.
\item $\Phi(p_{1},\ldots,p_{m})$ is a Boolean formula.\end{enumerate}
\begin{algorithm}
(Lift)\\
Input: \emph{$(c_{1},\ldots,c_{k-1})\in\mathbb{R}^{k-1}$, $W$,}\textup{\emph{
$B_{1},\ldots,B_{m}$}}\textup{,}\textup{\emph{ $\Phi$.}}\\
\textup{\emph{Output:}}\textup{ A }\emph{level $k$ cylindrical
algebraic subformula}\textup{ \[
b(x_{1},\ldots,x_{n})\]
 such that\begin{equation}
\forall(x_{1},\ldots,x_{k-1})\in C\: b(x_{1},\ldots,x_{n})\Leftrightarrow\Phi(B_{1}(x_{1},\ldots,x_{n}),\ldots,B_{m}(x_{1},\ldots,x_{n}))\label{equiv2}\end{equation}
}
\begin{enumerate}
\item Let $d_{1}<\ldots<d_{l}$ be all real roots of \[
\{f(c_{1},\ldots,c_{k-1},x_{k}):f\in W_{k}\}\]

\item For $1\leq i\leq l$, let $r_{i}:=Root_{x_{k},p}f$, where $f\in W_{k}$
and $d_{i}$ is the $p$-th root of $f(c_{1},\ldots,c_{k-1},x_{k})$.
\item For $f\in W_{k}$, if $d_{i}$ is a root of $f(c_{1},\ldots,c_{k-1},x_{k})$,
let $M(f,i)$ be its multiplicity, otherwise $M(f,i):=0$.
\item For $1\leq i\leq l$, set $a_{2i}(x_{1},\ldots,x_{k}):=(x_{k}=r_{i})$
and $c_{k,2i}:=d_{i}$. 
\item For $0\leq i\leq l,$ set \[
a_{2i+1}(x_{1},\ldots,x_{k}):=(r_{i}<x_{k}<r_{i+1})\]
where $r_{0}:=-\infty$ and $r_{l+1}:=\infty$, and pick $c_{k,2i+1}\in(d_{i},d_{i+1})\cap\mathbb{Q}$,
where $d_{0}:=-\infty$ and $d_{l+1}:=\infty$.
\item For $1\leq i\leq2l+1$

\begin{enumerate}
\item For $1\leq j\leq m$, if $B_{j}=false$, set $G_{j}=false$, otherwise
let $G_{j}$ be the formula $G$ found using Lemma \ref{lem2} applied
to $B_{j}$, $a_{i}$, $(c_{1},\ldots,c_{k-1})$, \\
$d_{1},\ldots,d_{l}$ and $M$.
\item Let $\Psi:=\Phi(G_{1},\ldots,G_{m})$. If $\Psi$ is $true$ or $false$,
set $b_{i}(x_{1},\ldots,x_{n}):=\Psi$.
\item Otherwise set $b_{i}(x_{1},\ldots,x_{n})$ to\[
Lift((c_{1},\ldots,c_{k-1},c_{k,i}),W,G_{1},\ldots,G_{m},\Phi)\]

\end{enumerate}
\item Return \[
b(x_{1},\ldots,x_{n}):=\bigvee_{1\leq i\leq2l+1}a_{i}(x_{1},\ldots,x_{k})\wedge b_{i}(x_{1},\ldots,x_{n})\]

\end{enumerate}
\end{algorithm}
The second subalgorithm requires its input to satisfy the following
conditions.
\begin{enumerate}
\item Either $a(x_{1})=(x_{1}=r)$, where $r\in Alg$, or $a(x_{1})=(r<x_{1}<s)$,
where $r\in Alg\cup\{-\infty\}$ and $s\in Alg\cup\{\infty\}$.
\item For $1\leq j\leq m$, $B_{j}$ is a\emph{ }level $2$ cylindrical
algebraic subformula\emph{ }of a CAF\[
F_{j}(x_{1},\ldots,x_{n}):=a(x_{1})\wedge B_{j}(x_{1},\ldots,x_{n})\]

\item $\Phi(p_{1},\ldots,p_{m})$ is a Boolean formula.\end{enumerate}
\begin{algorithm}
(CombineStacks)\\
Input:\emph{ $a$,}\textup{\emph{ $B_{1},\ldots,B_{m}$}}\textup{,}\textup{\emph{
$\Phi$.}}\textup{}\\
\textup{\emph{Output:}}\textup{ A CAF $F(x_{1},\ldots,x_{n})$
such that\begin{equation}
F(x_{1},\ldots,x_{n})\Longleftrightarrow\Phi(F_{1}(x_{1},\ldots,x_{n}),\ldots,F_{m}(x_{1},\ldots,x_{n}))\label{equiv1}\end{equation}
}
\begin{enumerate}
\item Let $W:=(W_{1},\ldots,W_{n})$ be a projection sequence for $P_{F_{1}}\cup\ldots\cup P_{F_{m}}$.
\item If $a(x_{1})=(x_{1}=r)$ then set $l:=0$, $a_{1}(x_{1}):=(x_{1}=r)$,
and $c_{1,1}:=r$. 
\item If $a(x_{1})=(r<x_{1}<s)$ then

\begin{enumerate}
\item Let $r_{1}<\ldots<r_{l}$ be all real roots of elements of $W_{1}$
in $(r,s)$. 
\item For $1\leq i\leq l,$ set $a_{2i}(x_{1}):=(x_{1}=r_{i})$ and $c_{1,2i}:=r_{i}$. 
\item For $0\leq i\leq l,$ set $a_{2i+1}(x_{1}):=(r_{i}<x_{1}<r_{i+1})$
and pick $c_{1,2i+1}\in(r_{i},r_{i+1})\cap\mathbb{Q}$, where $r_{0}:=r$
and $r_{l+1}:=s$.
\end{enumerate}
\item For $1\leq i\leq2l+1$ set\[
b_{i}(x_{1},\ldots,x_{n}):=Lift((c_{1,i}),W,B_{1},\ldots,B_{m},\Phi)\]

\item Return \[
F(x_{1},\ldots,x_{n}):=\bigvee_{1\leq i\leq2l+1}a_{i}(x_{1})\wedge b_{i}(x_{1},\ldots,x_{n})\]

\end{enumerate}
\end{algorithm}
We can now describe the algorithm \emph{CADCombine} (cf. \cite{S8},
Algorithm 17).
\begin{algorithm}
(CAFCombine)\\
Input:\emph{ Cylindrical algebraic formulas}\textup{\emph{ \[
F_{1}(x_{1},\ldots,x_{n}),\ldots,F_{m}(x_{1},\ldots,x_{n})\]
}}\textup{ and a Boolean formula}\textup{\emph{ $\Phi(p_{1},\ldots,p_{m})$.}}\textup{}\\
\textup{\emph{Output:}}\textup{ A CAF $F(x_{1},\ldots,x_{n})$
such that\begin{equation}
F(x_{1},\ldots,x_{n})\Longleftrightarrow\Phi(F_{1}(x_{1},\ldots,x_{n}),\ldots,F_{m}(x_{1},\ldots,x_{n}))\label{equiv3}\end{equation}
}
\begin{enumerate}
\item Let $r_{1}<\ldots<r_{l}$ be all real roots of $(P_{F_{1}}\cup\ldots\cup P_{F_{m}})\cap\mathbb{R}[x_{1}]$.
\item For $1\leq i\leq l,$ set $a_{2i}(x_{1}):=(x_{1}=r_{i})$ and for
$0\leq i\leq l,$ set $a_{2i+1}(x_{1}):=(r_{i}<x_{1}<r_{i+1})$, where
$r_{0}:=-\infty$ and $r_{l+1}:=\infty$.
\item For $1\leq i\leq2l+1$

\begin{enumerate}
\item For $1\leq j\leq m$, let $G_{j}$ be the formula $G$ found using
Lemma \ref{lem1} applied to $F_{j}$ and $a_{i}$.
\item Let $j_{1},\ldots,j_{s}$ be all $1\leq j\leq m$ for which $G_{j}$
is neither $true$ nor $false$.
\item Let $\Psi(p_{j_{1}},\ldots p_{j_{s}})$ be the formula obtained from
$\Phi$ by replacing $p_{j}$ with $G_{j}$ for all $j$ for which
$G_{j}$ is $true$ or $false$.
\item If $\Psi$ is $true$ or $false$, set $H_{i}(x_{1},\ldots,x_{n}):=a_{i}\wedge\Psi$.
\item Otherwise set $H_{i}(x_{1},\ldots,x_{n})$ to \[
CombineStacks(a_{i};G_{j_{1}},\ldots,G_{j_{s}};\Psi)\]

\end{enumerate}
\item Return $F(x_{1},\ldots,x_{n}):=\bigvee_{1\leq i\leq2l+1}H_{i}(x_{1},\ldots,x_{n})$.
\end{enumerate}
\end{algorithm}
\begin{proof}
(Correctness of the algorithms) To show correctness of \emph{CombineStacks},
let us first show that inputs to \emph{Lift} satisfy the required
conditions. Condition (1) follows from step 1 of \emph{CombineStacks}.
If $k=2$, the cell $C$ is defined as a root or the open interval
between two subsequent roots of polynomials of $W_{1}$. For $k>2$,
the cell $C$ is defined as a graph of a root or the set between graphs
of two subsequent roots of polynomials of $W_{k-1}$ over a cell on
which $W_{k-1}$ is delineable. This proves condition (2). Conditions
(3) and (4) are guaranteed by Lemmas \ref{lem1} and \ref{lem2}.
Finally, (5) is satisfied, because $\Phi$ is always the same formula,
given as input to \emph{CombineStacks}. 

To complete the proof we need to show the equivalences (\ref{equiv1})
and (\ref{equiv2}). Equivalence (\ref{equiv2}) follows from Lemma
\ref{lem2} and the fact that the sets \[
\{(x_{1},\ldots,x_{k}):(x_{1},\ldots,x_{k-1})\in C\wedge a_{i}(x_{1},\ldots,x_{k})\}\]
are disjoint and their union is equal to $C\times\mathbb{R}$. Equivalence
(\ref{equiv1}) follows from Lemma \ref{lem1} and the fact that the
sets $\{x_{1}\in\mathbb{R}:a_{i}(x_{1})\}$ are disjoint and their
union is equal to $\mathbb{R}$. 

Correctness of \emph{CAFCombine} follows from Lemma \ref{lem1}, correctness
of \emph{CombineStacks,} and the fact that the sets $\{x_{1}\in\mathbb{R}:a_{i}(x_{1})\}$
are disjoint and their union is equal to $\mathbb{R}$. \end{proof}
\begin{example}
Let\begin{eqnarray*}
f_{1} & := & (x+1)^{4}+y^{4}-4\\
g_{1} & := & (x+2)^{2}+y^{2}-5\\
f_{2} & := & (x-1)^{4}+y^{4}-4\\
g_{2} & := & (x-2)^{2}+y^{2}-5\end{eqnarray*}
and let \begin{eqnarray*}
A_{1} & := & \{(x,y)\in\mathbb{R}^{2}\::\: f_{1}<0\wedge g_{1}<0\}\\
A_{2} & := & \{(x,y)\in\mathbb{R}^{2}\::\: f_{2}<0\wedge g_{2}<0\}\end{eqnarray*}
The following CAFs represent cell decompositions of $A_{1}$ and $A_{2}$.\begin{eqnarray*}
F_{1}(x,y) & := & r_{1}<x<r_{2}\wedge Root_{y,1}f_{1}<y<Root_{y,2}f_{1}\vee\\
 &  & x=r_{2}\wedge Root_{y,1}f_{1}<y<Root_{y,2}f_{1}\vee\\
 &  & r_{2}<x<r_{4}\wedge Root_{y,1}g_{1}<y<Root_{y,2}g_{1}\\
F_{2}(x,y) & := & r_{3}<x<r_{5}\wedge Root_{y,1}g_{2}<y<Root_{y,2}g_{2}\vee\\
 &  & x=r_{5}\wedge Root_{y,1}g_{2}<y<Root_{y,2}g_{2}\vee\\
 &  & r_{5}<x<r_{6}\wedge Root_{y,1}f_{2}<y<Root_{y,2}f_{2}\end{eqnarray*}
where\begin{eqnarray*}
r_{1} & := & -1-\sqrt{2}\approx-2.414\\
r_{2} & := & Root_{x,1}x^{4}+6x^{3}+10x^{2}-2x-1\approx-0.244\\
r_{3} & := & 2-\sqrt{5}\approx-0.236\\
r_{4} & := & -2+\sqrt{5}\approx0.236\\
r_{5} & := & Root_{x,2}x^{4}-6x^{3}+10x^{2}+2x-1\approx0.244\\
r_{6} & := & 1+\sqrt{2}\approx2.414\end{eqnarray*}
Compute a CAF representation of $A_{1}\cap A_{2}$ (Figure 1) using
\emph{CAFCombine}. 

The input consists of $F_{1}$, $F_{2}$ and $\Phi(p_{1},p_{2}):=p_{1}\wedge p_{2}$.
The roots computed in step (1) are $r_{1}$, $r_{2}$, $r_{3}$, $r_{4}$,
$r_{5}$ and $r_{6}$. In step (3), for all $i\neq7$, either $G_{1}$
or $G_{2}$ is $false$, and hence $\Psi=false$. For $i=7$ the algorithm
computes \emph{$CombineStacks(a_{7};G_{1},G_{2};\Phi)$,} where \begin{eqnarray*}
a_{7} & := & r_{3}<x<r_{4}\\
G_{1} & := & Root_{y,1}g_{1}<y<Root_{y,2}g_{1}\\
G_{2} & := & Root_{y,1}g_{2}<y<Root_{y,2}g_{2}\end{eqnarray*}
The projection sequence computed in step (1) of \emph{CombineStacks}
is \begin{eqnarray*}
W_{2} & := & \{g_{1},g_{2}\}\\
W_{1} & := & \{x,x^{2}+4x-1,x^{2}-4x-1\}\end{eqnarray*}
The only root of $W_{1}$ in $(r_{3},r_{4})$ is $0$. The returned
cell decomposition of $A_{1}\cap A_{2}$ consists of three cells constructed
by \emph{Lift} over cells $r_{3}<x<0$, $x=0$, and $0<x<r_{4}$.
\begin{eqnarray*}
F(x,y) & := & r_{3}<x<0\wedge Root_{y,1}g_{2}<y<Root_{y,2}g_{2}\vee\\
 &  & x=0\wedge-1<y<1\vee\\
 &  & 0<x<r_{4}\wedge Root_{y,1}g_{1}<y<Root_{y,2}g_{1}\end{eqnarray*}
Note that the computation did not require including $f_{1}$ and $f_{2}$
in the projection set.
\end{example}
\begin{figure}
\caption{Sets $A_{1}$ and $A_{2}$}

\includegraphics[scale=0.64]{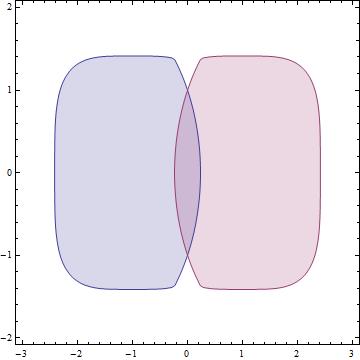}

\end{figure}

\section{\label{SubdivideSec}Algorithm Subdivide}

In this section we propose an algorithm for subdividing polynomial
systems $S(x_{1},\ldots,x_{n})$ given in disjunctive normal form.
From experimenting with various subdivision methods we deduced the
following rules for designing a subdivision heuristic.
\begin{itemize}
\item Do not subdivide conjunctions.
\item Group terms of the disjunction so that different groups have as few
common polynomials as possible.
\item Common polynomials that contain $x_{n}$ matter much more than polynomials
that do not contain $x_{n}$.
\item Common polynomials with higher degrees in $x_{n}$ matter more than
polynomials with lower degrees in $x_{n}$.
\end{itemize}
This led us to the following subdivision algorithm. The algorithm
depends on a parameter $0\leq p\leq1$ to be determined experimentally.
\begin{notation}
Let $S(x_{1},\ldots,x_{n})$ and $T(x_{1},\ldots,x_{n})$ be real
polynomial systems. Let $Wt(S)$ denote the sum of degrees in $x_{n}$
of all distinct polynomials that appear in $S$, and let $Wt(S,T)$
denote the sum of degrees in $x_{n}$ of all distinct polynomials
that appear both in $S$ and in $T$.\end{notation}
\begin{algorithm}
\label{AlgSubdivide}(Subdivide)\\
Input: \emph{A real polynomial system} $S(x_{1},\ldots,x_{n})$.\textup{}\\
\textup{\emph{Output:}}\textup{ A }Boolean formula $\Phi$ and
real polynomial systems $P_{1},\ldots,P_{m}$ such that \[
S(x_{1},\ldots,x_{n})\Leftrightarrow\Phi(P_{1}(x_{1},\ldots,x_{n}),\ldots,P_{m}(x_{1},\ldots,x_{n}))\]

\begin{enumerate}
\item If $S$ is not a disjunction return $\Phi:=Id$ and $P_{1}:=S$.
\item Let $S=S_{1}\vee\ldots\vee S_{k}$. Construct a graph $G$ as follows.

\begin{enumerate}
\item The vertices of $G$ are $S_{1},\ldots,S_{k}$.
\item There is an edge connecting $S_{i}$ and $S_{j}$ if $Wt(S_{i},S_{j})>0$
and \[
Wt(S_{i},S_{j})\geq p\min(Wt(S_{i}),Wt(S_{j}))\]

\end{enumerate}
\item Compute the connected components $\{S_{1,1},\ldots,S_{1,l_{1}}\},\ldots,\{S_{m,1},\ldots,S_{m,l_{m}}\}$
of $G$.
\item For $1\leq j\leq m$ set $P_{j}:=S_{j,1}\vee\ldots\vee S_{j,l_{j}}$.
\item Set $\Phi(p_{1},\ldots,p_{m}):=p_{1}\vee\ldots\vee p_{m}$
\item Return $\Phi$ and $P_{1},\ldots,P_{m}$.
\end{enumerate}
\end{algorithm}

\section{Empirical Results}

In this section we compare performance of\emph{ DivideAndConquerCAD}
and of direct CAD computation. As benchmark problems we chose formulas
obtained by application of virtual term substitution methods to quantifier
elimination problems, because such formulas are {}``naturally occurring''
CAD inputs that are disjunctions of many terms. We ran four variants
of \emph{DivideAndConquerCAD,} corresponding to different choices
of the parameter $p$ in \emph{Subdivide.} We used $p=0.25$, $p=0.5$,
$p=0.75$, and $p=1$. The algorithms have been implemented in C,
as a part of the kernel of \emph{Mathematica}. For direct CAD computation
the algorithms use the \emph{Mathematica} implementation of the version
of CAD described in \cite{S7}. The experiments have been conducted
on a Linux virtual machine with a $3.07$ GHz Intel Core i7 processor
and $6$ GB of RAM available. Each computation was given a time limit
of $1200$ seconds.

\subsection{Examples from \cite{BS}}

In \cite{BS} there are 56 examples obtained by application of virtual
term substitution methods to quantifier elimination problems. Here
we used 28 of the examples, 7 from applications and 21 randomly generated,
for which there were between 2 and 4 free variables. In 22 of the
28 examples at least one method finished within the time limit and
the difference between the slowest and the fastest timing was at least
$10\%$. In 10 of the 22 examples the input systems were subdivided
only for $p=1$, and \emph{DivideAndConquerCAD} with $p=1$ was slower
than direct CAD computation. Timings for the remaining 12 examples
are shown in Figure \ref{Fig1} (note the logarithmic scale). In 9
of the 12 examples \emph{DivideAndConquerCAD} with $p=0.75$ is faster
than direct CAD computation, in 3 examples it is slower. In the 12
examples \emph{DivideAndConquerCAD} with $p=0.75$ is the fastest
method on average, $2.25$ times faster than direct CAD computation.

\begin{figure}

\caption{\label{Fig1}Examples from \cite{BS}}
\includegraphics[width=0.95\textwidth]{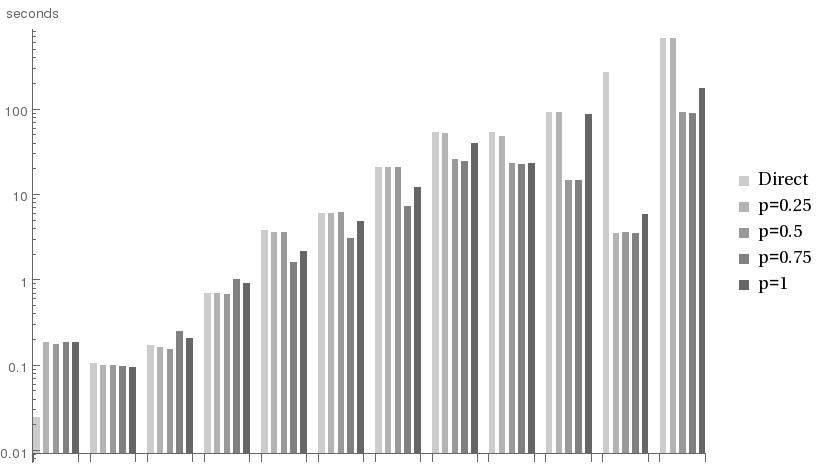}

\end{figure}

\subsection{Random examples}

We generated 16 random examples with 2 or 3 variables. The examples
were obtained by elimination of up to three quantifiers using virtual
term substitution (with intermediate formula simplification). The
initial quantified systems were randomly generated quantified conjunctions
of 2-4 polynomial equations or inequalities. The polynomials were
linear in all quantified variables except for the first one and quadratic
in the remaining variables. The quadratic term in the first quantifier
variable did not contain other quantifier variables. The results of
quantifier elimination were put in disjunctive normal form and only
disjunctions of at least 10 terms were selected. In 2 examples all
timings were the same. Timings for the remaining 14 examples are shown
in Figure \ref{Fig2} (note the logarithmic scale). In 12 of the 14
examples \emph{DivideAndConquerCAD} with $p=0.75$ is faster than
direct CAD computation, in one example it is slower. In the 14 examples
\emph{DivideAndConquerCAD} with $p=0.75$ is the fastest method on
average, at least $3.96$ times faster than direct CAD computation
(in two examples direct CAD computation did not finish in $1200$
seconds and \emph{DivideAndConquerCAD} with $p=0.75$ did).

\begin{figure}

\caption{\label{Fig2}Random examples}
\includegraphics[width=0.95\textwidth]{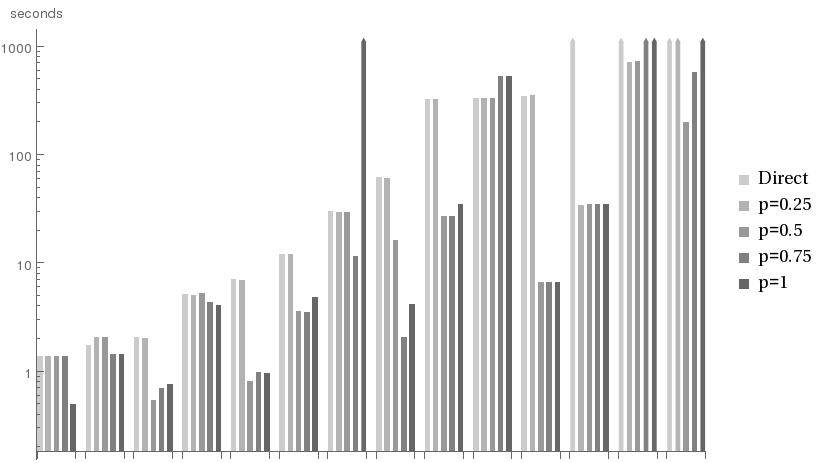}

\end{figure}

\subsection{Conclusions}

The experiments show that \emph{DivideAndConquerCAD} with graph-based
\emph{Subdivide} is often, but not always, faster than direct CAD
computation\emph{.} On average, the best performance was obtained
by choosing the parameter value $p=0.75$ in \emph{Subdivide}.

\end{document}